\title{Energy Consumption Optimization in Mobile Communication Networks}
\author{Florian Bahlke and Marius Pesavento}
\date{\today}
\newcommand{\bs}{\boldsymbol}
\DeclareMathOperator*{\argmax}{arg\,max}
\newtheorem{proposition}{Proposition}
\begin{document}

\maketitle

\begin{abstract}
This work addresses the challenge of minimizing the energy consumption of a wireless communication network by joint optimization of the base station transmit power and the cell activity. A mixed-integer nonlinear optimization problem is formulated, for which a computationally tractable linear inner approximation algorithm is provided. The proposed method offers great flexibility in optimizing the network operation by considering multiple system parameters jointly, which mitigates a major drawback of existing state-of-the-art schemes that are mostly based on heuristics. Simulation results show that the proposed method exhibits high performance in decreasing the energy consumption, and provides implicit load balancing in difficult high demand scenarios.
\end{abstract}

\section{Introduction} \label{sec:intro}
In the evolution of wireless communication networks over the recent years, new technologies have been proposed to fulfill the increasing performance requirements for the upcoming fifth generation (5G) and subsequent generations of mobile communication networks \cite{andrews14,boccardi14,iwamura15}. In addition to serving users with ever increasing data rates, novel applications require very low latency connections and extreme reliability. Among the most promising technologies for 5G are Massive-MIMO systems, Millimeter-Wave communications and heterogeneous network structures with dense cell deployments. Between these three options, the latter one arguably poses the lowest technological risks and level of commitment from network providers. For such dense and heterogeneous networks, the existing cell architecture is supplemented with additional cells containing base stations of variable size, both in transmit power and coverage area. This densification of the network has been identified as a promising and scalable approach for the next decades of wireless communications \cite{bhushan14,ge16,heath17}. Due to the increase in intercell interferences limiting the achievable data throughput, novel control schemes for such networks need to be devised that supersede the established strategy of deploying additional cells without increasing the amount of coordination between them \cite{andrews16,cavalcante14,rowell14}. The wireless communication networks of the future are envisioned to have a significantly higher energy efficiency in terms of energy consumption per transmitted bit of data. In the 5G standard, this will be achieved trough intelligent switching of each cell's operation between active phases and sleep modes - abandoning the always-on and always-connected concept of contemporary base stations - a dynamic scaling of the transmit power, and an energy-focused design of multi-antenna systems \cite{lange14,cavalcante14,correia10,vereecken11}. \\
We propose a method for minimizing the energy consumption of the wireless communication network, subject to cell load constraints that prevent cells from being unable to serve the demand of associated users with their available time-frequency resources. This approach is suitable for the planning the network parameters ahead of operation, and complements energy efficient transceiver techniques commonly applied in-operation for example to maximize instantaneous data rates. In previous research, extensive effort has been invested into the analysis and optimization of cell loads for heterogeneous mobile communication networks \cite{yuan17,yuan12,siomina13,bahlke15,bahlke18,yangz15}. The cell load has been used in various schemes to optimize the transmit powers \cite{yuan15,yuan16}, in the design of energy-efficient beamformers for multi-antenna systems \cite{chen16,miao11}, and to optimize the cell on-off status to enable scheduling for sleep mode and activity periods \cite{celebi17,lei15}. These methods share one fundamental disadvantage, which is that they cannot jointly optimize the transmit power and the cell activity status. Switching cells off is just considered implicitly, as the transmit power being scaled down to zero \cite{yuan15,utschick16}. The transmit power in a practical system however might be lower-bounded by a nonzero level, for example due to transmit power independent losses and nonlinearities in the power amplifiers \cite{kandukuri02,desset12,arnold10}. Heuristic approaches also heavily rely on the cell load being a strictly decreasing continuous function of its transmit power, which requires multiple simplifications in the way how the network is modeled, particularly regarding the used adaptive modulation and coding schemes. For example, the load a user adds to a cell needs to be a strictly decreasing function of the user's signal-to-interference-and-noise ratio (SINR), the assignment of users to cells needs to be constant and the operable transmit power range needs to be lower-bound by zero, which all do not necessarily apply in practical systems.   \\
In this work, we propose an approach for minimizing the network's energy consumption based on Mixed-Integer Linear Programming (MILP), which expands upon state of the art solutions in the following aspects:
\begin{itemize}
	\item The transmit power and the activity status of the cell (on or off) are jointly optimized. This leads to a mixed-integer problem as, e.g.~the transmit power is optimized on a continuous scale and the cell activity indicator is binary.
	\item While the original optimization problem is nonlinear and computationally intractable to solve, we propose a linear inner approximation. The solution of this approximate problem is always feasible for the original problem.
	\item The proposed method easily incorporates additional convex constraints such as minimum transmit power and minimum SINR threshold constraints as well as upper bounds on the user rates due to finite modulation and coding schemes. 
	\item The assignment of users to cells is one of the design parameters, and changes dynamically according to which cell provides the strongest signal. The proposed scheme also allows the incorporation of other user allocation rules.
\end{itemize}
The remainder of the paper is structured as follows: In Section \ref{sec:sysmod} we introduce the system model for the wireless communication network. A mixed-integer nonlinear programming (MINLP) approach to minimize the network's energy consumption is introduced in Section \ref{sec:opt}, for which we provide an inner linear approximation (MILP). Simulation results for different energy consumption models and a comparative analysis between the proposed and alternative methods are provided in Section \ref{sec:simres}. Finally, we summarize the results and provide an outlook onto future work in Section \ref{sec:conc}.

\section{System Model} \label{sec:sysmod}
We consider a downlink wireless communication network with $K$ cells, each equipped with a single antenna base station (BS). The transmit power of the BS in cell $k=1,\ldots,K$ is denoted as $p_k$, and the vector of all transmit powers as $\bs{p} = \left[p_1, p_2, \ldots, p_K\right]^T$. In the following we assume that each cell is defined as the coverage area of the BS, and therefore we use the terms interchangeably. In practical networks the transmit power $p_k$ is generally confined to lie in a the interval 
\begin{equation} \label{eq:pconst}
0 < P_k^{\mathrm{MIN}} \leq p_k \leq P_k^{\mathrm{MAX}},
\end{equation}
where, due to physical hardware limitations, such as linearity constraints in the power amplifiers and radiation efficiency requirements of the antenna the thresholds $P_k^{\mathrm{MIN}}$ and $P_k^{\mathrm{MAX}}$ are positive (excluding $P_k^{\mathrm{MIN}}=0$) and finite \cite{kandukuri02,desset12,arnold10}. The network contains $M$ demand points (DP) with DP $m=1,\ldots,M$ having the data demand $D_m$. A DP may represent a single mobile node, or in case of a cluster of closely spaced mobile nodes with similar channel characteristics to the connected BS, the accumulation of multiple nodes. The gain of BS $k$ is denoted by $\tilde{g}^\text{BS}_k$, correspondingly we use $\tilde{g}^\text{DP}_m$ as the antenna gain of the DP $m$. The large-scale path attenuation factor of signals transmitted from the BS in cell $k$ to DP $m$ are denoted as $\tilde{g}^\text{PATH}_{mk}$. Small scale fading parameters will be neglected because the proposed method is assumed to be applied on a network planning timescale, and on averaged rather than instantaneous channel information. Combining the aforementioned factors, we denote as $g_{km} = \tilde{g}^\text{BS}_k \tilde{g}^\text{PATH}_{mk} \tilde{g}^\text{DP}_m$ the attenuation factor for transmissions between BS $k$ and DP $m$. \\
The SINR of cell $k$ serving DP $m$ can be modeled as
\begin{equation} \label{eq:sinr}
\gamma_{km}=\frac{p_k g_{km}}{\sum_{j=1,\ldots,K \backslash \{k\}} p_j g_{jm} + \sigma^2},
\end{equation}
where $\sigma^2$ represents the variance of additive white Gaussian noise (AWGN). This corresponds to an orthogonal frequency division multiple access (OFDMA) system with full frequency reuse between cells. The bandwidth efficiency of BS $k$ serving DP $m$ is denoted as $\eta_{km}^{BW}$ \cite{majewski10}, and the total available system bandwidth is $W$. Gains in data rate achievable in multi-antenna systems can also be accounted for through the bandwidth efficiency parameter $\eta_{km}^{BW}$. The achieved radio downlink bandwidth \cite{morgensen07} of cell $k$ in DP $m$ can, e.g., be determined as
\begin{equation}
B_{km} = \eta_{km}^{BW} W \log_2 \left( 1 + \gamma_{km} \right)
\end{equation}
where the interferences are treated as noise. To satisfy the data demand $D_m$ in DP $m$, cell $k$ needs to allocate the fraction $D_m/B_{km}$ of its resources. In order to model the allocation of DPs to BSs, we use the binary parameter
\begin{equation}
A_{km} = \begin{cases}
1 \; \text{if DP} \; m \; \text{is allocated to cell} \; k \\
0 \; \text{otherwise}
\end{cases},
\end{equation}
and we denote as $\bs{A} \in \{0,1\}^{K \times M}$ the combined allocation matrix. We assume that each DP is allocated to a single BS, so that $\sum_{k=1}^K A_{km}=1$. To determine the fraction of available resources required by a cell in order to satisfy the data demand $D_m$ of all its allocated DPs $m$, as specified by $A_{km}$, we compute the total load factor $\rho_k$ of cell $k$ \cite{majewski10,siomina12a} as
\begin{equation} \label{eq:load}
\rho_k = \sum_{m=1}^M A_{km} \frac{D_m}{B_{km}} = \sum_{m=1}^M A_{km} \frac{D_m}{W \eta_{km}^{BW}} \frac{1}{\log_2 \left( 1 + \gamma_{km} \right)}.
\end{equation}
 We further define the vector of load factors $\bs{\rho} = \left[ \rho_1, \rho_2, \ldots, \rho_K \right]^T$. It is observable that $\rho_k>0$ holds and that cell $k$ is not overloaded if $\rho_k \leq 1$. An overloaded cell, with $\rho_k > 1$, cannot serve the minimum data demands $D_m$ of its allocated DPs under the present SINR conditions. Under these circumstances, new connections cannot be established, and existing connections have to be dropped. Let
\begin{equation} \label{eq:f}
f(\gamma) = \frac{1}{\log_2\left( 1 + \gamma \right)}
\end{equation}
denote, for later reference, the inverse rate, measured in time per transmitted bit, corresponding to a link with SINR $\gamma$.\\ 
The interference term $\sum_{j=1,\ldots,K \backslash \{k\}} p_j g_{jm} + \sigma^2$ in Eq.~\eqref{eq:sinr} can be weighted with the load factors $\rho_j$ of interfering cells, to account for the fact that lightly loaded cells do not need to fully use their available time-frequency resources and therefore generate on average lower levels of interference than heavily loaded ones \cite{majewski10,siomina12a,yuan12}. Because even lightly loaded cells might fully interfere with each other if there are no coordination mechanisms employed, we will use, without loss of generality, the "worst-case" assumption of full interference between active cells in our simulations. \\
To indicate the on-off activity status of cells, we introduce the binary model parameter
\begin{equation}
x_k = \begin{cases}
1 \quad \text{if cell} \; k \; \text{is active}\\
0 \quad \text{otherwise}
\end{cases}.
\end{equation}
and the vector $\bs{x} = \left[x_1,x_2,\ldots,x_K\right]^T$ representing the activity status of all cells in the network. \\
We define the energy consumption of cell k as
\begin{equation} \label{eq:cellenergy}
E_k = \Gamma \left( x_k,p_k,\rho_k \right)
\end{equation}
where $\Gamma \left( x_k,p_k,\rho_k \right)$ is an arbitrary linearly increasing function of the cell's on-off status $x_k$, transmit power $p_k$ and load $\rho_k$. For example, the energy consumption function used in Eq.~\eqref{eq:cellenergy} can be defined as
\begin{equation} \label{eq:speceng}
\Gamma  \left( x_k,\tilde{p}_k,\rho_k \right) = T_0 P_k^{\mathrm{MAX}} \left(\kappa_1  x_k + \kappa_2 \frac{\tilde{p}_k}{P_k^{\mathrm{MAX}}} + \kappa_3\tilde{\rho}_k\right)
\end{equation}
where the parameters $\kappa_1$, $\kappa_2$ and $\kappa_3$ are weighting factors for the cell's energy consumption based on the on-off status, transmit power, and load factor, respectively, and $T_0$ is a time constant. The load factor of a cell can impact its power consumption because it reflects the amount of its utilization \cite{son11}. Recent network models therefore have established that, especially for small cells, the power consumption is best modeled as a function of the cell load in addition to the transmit power \cite{utschick16,yuan16}. Note that the terms $x_k$, $\tilde{p}_k/P_k^{\mathrm{MAX}}$ and $\tilde{\rho}_k$ cannot exceed the value $1$, for each cell $k$. For more sophisticated models for the power consumption of mobile communication BSs, which incorporate energy consumption of wired backhaul, and individual factors for all components of the BS we refer to \cite{hasan11,bogucka11,desset12,deryuck14}. Since our model can use any combination of the three factors in Eq.~\eqref{eq:cellenergy}, we obtain a highly flexible approach for energy minimization, which is shown in the following Section \ref{sec:opt}.

\section{Energy Consumption Optimization} \label{sec:opt}
In this section, we formulate an optimization problem to minimize the energy consumption of the wireless network as defined in Eq.~\eqref{eq:cellenergy}, subject to DP to BS allocation-, minimum SINR- and cell load constraints. Physical layer requirements of the wireless communications standard, such as the used modulation and coding scheme in LTE-A, impose a minimum SINR level $\gamma^\text{MIN}>0$ for the transmission link to provide a non-zero rate, and define $\tau^\text{MAX} = f \left(\gamma^\text{MIN} \right)$. The load term $(D_m \tau^\text{MAX})/(W \eta_{km}^{BW})$ therefore is the highest load DP $m$ contributes to the overall load \eqref{eq:load} of cell $k$.\\
The SINR threshold $\gamma^\mathrm{MAX}$ denotes the SINR-level where, for $\gamma \geq \gamma^\mathrm{MAX}$, the highest available modulation- and coding scheme is used such that the maximum rate is achieved, and a further increase in SINR is not associated with an additional increase in rate. The inverse of the log-term in Eq.~\eqref{eq:load} for SINR-levels $\gamma \geq \gamma^\mathrm{MAX}$ is denoted as $\tau^\text{MIN} = f\left(\gamma^\text{MAX} \right)$. We further define
\begin{equation} \label{eq:fplus}
f_{\tau^\text{MIN}}^+(\gamma) = \max \left\{ f(\gamma),\tau^\text{MIN} \right\}.
\end{equation}
For the allocation of DPs to cells, we assume that cell range expansion is being utilized \cite{yuan12,ye13}, with $\theta_k$ denoting the bias value of cell $k$. DP $m$ is allocated to the cell $k$ that provides the highest product of received signal power $p_k g_km$ and bias value $\theta_k$.
Using as optimization parameters the binary cell activity indicator $\bs{x} \in \{0,1\}^{K \times 1}$ and allocation indicator $\bs{A} \in \{0,1\}^{M \times K}$, the continuous transmit power parameter $\bs{p} \in \mathbb{R}_{0+}^{K \times 1}$ and the cell load $\bs{\rho} \in \mathbb{R}_{0+}^{K \times 1}$, the energy minimization problem can be formulated as following:
\begin{subequations} \label{eq:orig1}
	\begin{align}
	\underset{\boldsymbol{x},\boldsymbol{p},\boldsymbol{A},\bs{\rho}}{\mathrm{minimize}} \qquad & \sum_{k=1}^K \Gamma \left( x_k,p_k,\rho_k \right) & \; & \label{eq:orig1_obj} \\ 
	\mathrm{subject \; to} \qquad	& P_k^{\mathrm{MIN}} \leq p_k \leq P_k^{\mathrm{MAX}} && \forall k \label{eq:orig1_pconst}\\
	& \sum_{k=1}^K A_{km} = 1 && \forall m \label{eq:orig1_alloc}\\
	& \sum_{k=1}^K A_{km} \leq x_k && \forall k,m \label{eq:orig1_ax}\\
	& \sum_{k=1}^K A_{km} \theta_k p_k g_{km} \geq x_j \theta_j p_j g_{jm} && \forall j,m \label{eq:orig1_mpow}\\
	& \sum_{k=1}^K A_{km} p_k g_{km} - \gamma^\mathrm{MIN} \left( \sum_j x_j (1-A_{jm})p_j g_{jm} + \sigma^2 \right) \geq 0 && \forall m \label{eq:orig1_sinrcon}\\
	& \rho_k = \sum_{m=1}^M A_{km} \frac{D_m}{W \eta_{km}^\mathrm{BW}} f_{\tau^\text{MIN}}^+ \left( \frac{p_k g_{km}}{\sum_{j=1}^K x_j (1-A_{jm})p_j g_{jm} + \sigma^2} \right)  && \forall k \label{eq:orig1_loadc} \\
	& \rho_k \leq 1 && \forall k \label{eq:orig1_load} \\
	& x_k,A_{km} \in \{0,1\} && \forall k,m \\
	& p_k \in \mathbb{R}_{0+} && \forall k
	\end{align}
\end{subequations}
In problem \eqref{eq:orig1}, the objective \eqref{eq:orig1_obj} aims to minimize the overall systems' energy consumption, which is the sum of the energy consumption of individual cells as defined in \eqref{eq:cellenergy} and \eqref{eq:speceng}. The constraint \eqref{eq:orig1_pconst} defines the feasible transmit power range of cell $k$ restricted according to \eqref{eq:pconst}. Each DP $m$ is served by exactly one cell $k$, and only active cells $\{k|x_k=1\}$ can serve any DP, as specified by \eqref{eq:orig1_alloc} and \eqref{eq:orig1_ax}, respectively. Constraint \eqref{eq:orig1_mpow} enforces that, each DP $m$ is allocated to the cell $k$ that provides highest product of received signal power and bias value \footnote{Typically the DP is allocated to the cell providing the highest received signal power, but this leads to an underutilization of the low-power small cells. If so-called "range expansion" is utilized, the signal power from small cells is weighted with a bias value, which corresponds to an increased coverage area \cite{ye13}.}. The load constraint that cell $k$ has to satisfy, as defined in \eqref{eq:load}, is specified in \eqref{eq:orig1_load}. \\
Problem \eqref{eq:orig1} is a combinatorial and nonconvex MINLP, and thus generally very difficult to solve. While significant advancements have been made for convex MINLPs \cite{hijazi2012mixed,belotti13}, it is universally agreed upon that nonconvex MINLPs pose a significant computational challenge where the chances of finding an optimal solution to any given problem highly depend on the problem size and structure \cite{floudas89,trespalacios2014review}. To maintain robustness and scalability for schemes based on network optimization problems, it is therefore advisable to find an MILP that represents a linear inner approximation or a linear reformulation of the original MINLP. The objective function \eqref{eq:orig1_obj} and constraints \eqref{eq:orig1_mpow}, \eqref{eq:orig1_sinrcon} and \eqref{eq:orig1_load} contain the bilinear term $x_k p_k$. We introduce a new variable $\tilde{p}_k \triangleq p_k x_k$ and reformulate \eqref{eq:orig1} as the following equivalent problem:
\begin{subequations} \label{eq:ptilde}
	\begin{align}
	\underset{\boldsymbol{x},\boldsymbol{\tilde{p}},\boldsymbol{A},\bs{\rho}}{\mathrm{minimize}} \qquad & \sum_{k=1}^K \Gamma \left( x_k,\tilde{p}_k,\rho_k \right) & \; & \\ 
	\mathrm{subject \; to} \qquad & x_k P_k^{\mathrm{MIN}} \leq \tilde{p}_k \leq x_k P_k^{\mathrm{MAX}} && \forall k \label{eq:ptilde_pconst} \\
	& \eqref{eq:orig1_alloc} - \eqref{eq:orig1_ax} \nonumber\\
	& \sum_{k=1}^K A_{km} \theta_k \tilde{p}_k g_{km} \geq \theta_j \tilde{p}_j g_{jm} && \forall j,m \label{eq:ptilde_mpow}\\
	& \sum_{k=1}^K A_{km} \tilde{p}_k g_{km} - \gamma^\mathrm{MIN} \left( \sum_j (1-A_{jm}) \tilde{p}_j g_{jm} + \sigma^2 \right) \geq 0 && \forall m \label{eq:ptilde_sinrcon}\\
	& \rho_k = \sum_{m=1}^M A_{km} \frac{D_m}{W \eta_{km}^\mathrm{BW}} f_{\tau^\text{MIN}}^+ \left( \frac{\tilde{p}_k g_{km}}{\sum_{j=1}^K (1-A_{jm}) \tilde{p}_j g_{jm} + \sigma^2} \right)  && \forall k \label{eq:ptilde_loadc} \\
	& \rho_k \leq 1 && \forall k \label{eq:ptilde_load} \\
	& x_k,A_{km} \in \{0,1\} && \forall k,m \\
	& \tilde{p}_k \in \mathbb{R}_{0+} && \forall k
	\end{align}
\end{subequations}
Using a lifting strategy, we will in the following introduce auxiliary parameters to represent bilinear products of optimization variables, which a more tractable, linear problem structure at the cost of increased problem dimensionality. Towards this aim, the bilinear products of binary allocation parameters $A_{km}$ and cell transmit powers $\tilde{p}_k$ in Eqs.~\eqref{eq:ptilde_mpow}, \eqref{eq:ptilde_sinrcon} and \eqref{eq:ptilde_loadc} have to be linearized. 
We define the set
\begin{equation} \label{eq:linset}
\mathcal{L} \coloneqq \{ \left(r,\overline{r},b,a \right) \in \mathbb{R}_{0+} \times \mathbb{R}^+ \times \{0,1\} \times \mathbb{R}_{0+}:  a\geq r-(1-b)\overline{r}, a\leq r,a\leq b \overline{r} \}
\end{equation}
with the binary parameter $b$ and the real parameter $r$ with $0 \leq r \leq \overline{r}$. The inequalities defining $\mathcal{L}$ in $\eqref{eq:linset}$ are affine in $r$, $b$ and $a$, and $\left(r,\overline{r},b,a \right) \in \mathcal{L}$ enforces $a=rb$, which will be used in the following reformulations to linearize bilinear products of binary and continuous optimization parameters \cite{Liberti2006}.\\
We introduce an new variable $\Omega_{km}$ and the corresponding matrix $\bs{\Omega} \in \mathbb{R}_{0+}^{K \times M}$. For the proposed lifting approach, we install $\left(\tilde{p}_k, P_k^{\mathrm{MAX}}, A_{km},\Omega_{km} \right) \in \mathcal{L} \; \forall k,m$ in problem \eqref{eq:ptilde}, which enforces that $\Omega_{km} = \tilde{p}_k A_{km}$, such that we can reformulate \eqref{eq:orig1} as:
\begin{subequations} \label{eq:orig2}
	\begin{align}
	\underset{\boldsymbol{x},\boldsymbol{p},\boldsymbol{A},\bs{\rho},\boldsymbol{\Omega}}{\mathrm{minimize}} \qquad & \sum_{k=1}^K \Gamma \left( x_k,\tilde{p}_k,\rho_k \right) & \; & \label{eq:orig2_obj} \\ 
	\mathrm{subject \; to} \qquad	& \eqref{eq:orig1_alloc} - \eqref{eq:orig1_ax}, \eqref{eq:ptilde_pconst}, \eqref{eq:linset} \nonumber \\
	& \sum_{k=1}^K \Omega_{km} \theta_k g_{km} \geq x_j \theta_j \tilde{p}_j  g_{jm} && \forall j,m \label{eq:orig2_mpow}\\
	& \sum_{k=1}^K \Omega_{km} g_{km} - \gamma^\mathrm{MIN} \left( \sum_j (1-\Omega_{jm}) g_{jm} + \sigma^2 \right) \geq 0 && \forall m \label{eq:orig2_sinrcon}\\
	& \rho_k = \sum_{m=1}^M A_{km} \frac{D_m}{W \eta_{km}^\mathrm{BW}} f_{\tau^\text{MIN}}^+ \left(\frac{\tilde{p}_k g_{km}}{\sum_{j=1,\ldots,K} (\tilde{p}_j-\Omega_{jm}) g_{jm} + \sigma^2} \right) && \forall k \label{eq:orig2_loadc} \\
	& \rho_k \leq 1 && \forall k \label{eq:orig2_load} \\
	& \left(\tilde{p}_k, P_k^{\mathrm{MAX}}, A_{km},\Omega_{km} \right) \in \mathcal{L} && \forall k,m \label{eq:powcs_omega} \\
	& x_k,A_{km} \in \{0,1\} && \forall k,m \\
	& \tilde{p}_k,\Omega_{km} \in \mathbb{R}_{0+} && \forall k
	\end{align}
\end{subequations}
From \eqref{eq:orig1} to \eqref{eq:orig2}, the auxiliary parameter $\bs{\Omega}$ has been used in constraints \eqref{eq:orig2_mpow}, \eqref{eq:orig2_sinrcon} and \eqref{eq:orig2_loadc} to replace $\Omega_{km} = \tilde{p}_k A_{km}$, whereas the remaining optimization parameters remain unchanged. The solution of problem \eqref{eq:orig2} can therefore be used to easily obtain the corresponding solutions for problem \eqref{eq:orig1} and vice-versa. Thus, both formulations can be considered equivalent.\\	
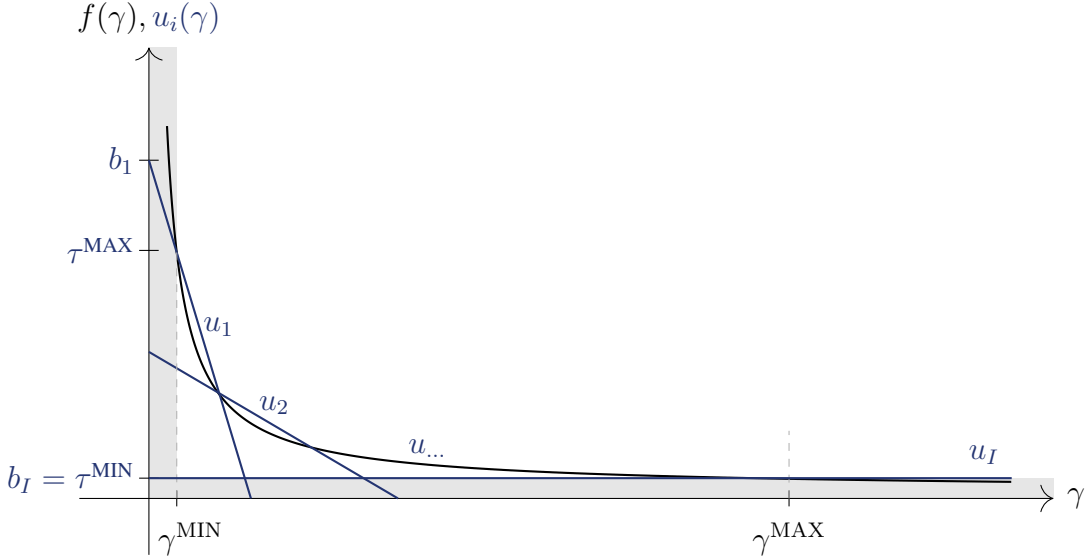
\begin{figure}
	\centering
	\begin{tikzpicture} [xscale=3.7,yscale=3]
	\definecolor{tud1d}{RGB}{36,53,114}
	\definecolor{tud4d}{RGB}{106,139,55}
	\definecolor{tud8d}{RGB}{40,80,100}
	
	\draw[-{>[scale=2]}] (-0.25,0) -- (3.25,0) node[right=1pt] {$\gamma$};
	\draw[-{>[scale=2]}] (0,-0.25) -- (0,2) node[above=1pt] {$f(\gamma), \color{tud1d} u_i(\gamma)$};
	
	\draw[smooth,samples=1001,domain=0.065:3.1,thick] plot (\x,{0.15/log2(1+\x)});
	

	\draw (-1pt,1.5) -- (1pt,1.5) node[left=5pt] {\color{tud1d} $b_1$};
	\draw (-1pt,1.1) -- (1pt,1.1) node[left=5pt] {\color{tud1d} $\tau^\text{MAX}$};
	\draw (-1pt,0.09) -- (1pt,0.09) node[left=5pt] {\color{tud1d} $b_{I}=\tau^\text{MIN}$};
	\draw[color=tud1d,thick] (0,1.5) -- (0.3666,0);
	\draw[color=tud1d,thick] (0,0.65) -- (0.895,0);
	
	\draw (0.1,-1pt) -- (0.1,1pt);
	\draw (0.15,1pt) node[left,below=8pt] {$\gamma^\mathrm{MIN}$};
	\draw (2.3,-1pt) -- (2.3,1pt);
	\draw (2.3,1pt) node[left,below=8pt] {$\gamma^\mathrm{MAX}$};
	
	\draw[dashed,color=black!30] (0.1,1.0909) -- (0.1,0);
	\draw (0.25,0.65) node[above=2pt] {\color{tud1d} $u_1$};
	\draw[dashed,color=black!30] (2.3,0) -- (2.3,0.3);
	\draw (0.45,0.3) node[above=2pt] {\color{tud1d} $u_2$};
	\draw (1,0.11) node[above=2pt] {\color{tud1d} $u_{\ldots}$};
	\draw [thick,color=tud1d] (0,0.09) -- (3.1,0.09);
	\draw (3,0.08) node[above=2pt] {\color{tud1d} $u_{I}$};
	
	\begin{pgfonlayer}{background}
	\fill[fill=black!10] (0,0) rectangle (0.1,2);
	\fill[fill=black!10] (0,0) rectangle (3.25,0.09);
	\end{pgfonlayer}
	\end{tikzpicture}
	\caption{Illustration of the piecewise linear over-approximation of the cell load function $f(\gamma)$ with the linear functions $u_i(\gamma)$ in the SINR interval $\gamma^\mathrm{MIN} \leq \gamma \leq \gamma^\mathrm{MAX}$.}
	\label{fig:linapprox}
\end{figure}
Problem \eqref{eq:orig2} is an integer linear program except for constraint \eqref{eq:orig2_loadc}, which is nonlinear due to the log-term in the function $f_{\tau^\text{MIN}}^+(\gamma)$ as defined in \eqref{eq:fplus}, the fractional SINR-term and the allocation factor $A_{km}$. In the following, we propose an affine inner approximation of \eqref{eq:orig2_loadc}-\eqref{eq:orig2_load}. We define a set of $I$ linear functions 
\begin{equation} \label{eq:u}
u_i(\gamma) = \alpha_i \gamma + \beta_i, i=1,\ldots,I,
\end{equation}
which satisfy the upper bound property
\begin{equation} \label{eq:ufkt}
\underset{i}{\max} \; u_i(\gamma) \geq f_{\tau^\text{MIN}}^+(\gamma) \quad \forall \; \gamma \geq \gamma^\mathrm{MIN},
\end{equation}
as illustrated in Fig.~\ref{fig:linapprox}. Since $f(\gamma)$ in \eqref{eq:f} is strictly decreasing, we conclude that all $u_i(\gamma)$ can be designed such that $\alpha_i \leq 0 \ \forall \ i$. To approximate the load for $\gamma \geq \gamma^\mathrm{MAX}$, as depicted in Fig.~\ref{fig:linapprox}, a constant function can be used with $u_I(\gamma) = \beta_I = \tau^\text{MIN}$. The issue of designing a suitable set of $u_i$ that keep the maximum absolute approximation error below a selectable threshold $\epsilon$ is discussed in Appendix \ref{sec:applin}. \\
We introduce the optimization parameter $\mu_{km}$ designed to be an upper bound of the load term in Eq.~\eqref{eq:orig2_loadc}, such that
\begin{equation} \label{eq:mufirst}
\mu_{km} \geq  u_i(\gamma) \quad \forall \; i,\gamma \geq \gamma^\mathrm{MIN}
\end{equation}
and the corresponding matrix $\bs{\mu} \in  \mathbb{R}_{0+}^{K \times M}$. For the interval $\gamma^\mathrm{MIN} \leq \gamma \leq \gamma^\mathrm{MAX}$, we reformulate the log-term contained in the  function $f_{\tau^\text{MIN}}^+(\gamma)$ in the constraint \eqref{eq:orig2_loadc} as
\begin{equation} \label{eq:orig2_loadc_ref1}
\rho_k = \sum_{m=1}^M A_{km} \frac{D_m}{W \eta_{km}^\mathrm{BW}} \mu_{km}
\end{equation}
where for \eqref{eq:u}-\eqref{eq:mufirst}
\begin{equation}\label{eq:orig2_loadc_ref2}
\mu_{km} \geq \alpha_i \frac{\tilde{p}_k g_{km}}{\sum_{j=1,\ldots,K} (1-\Omega_{jm}) g_{jm} + \sigma^2} + \beta_i \quad \forall \; i,k,m
\end{equation}
 We further denote the product of $\mu_{km}$ and allocation parameter $A_{km}$ as $\Lambda_{km}=\mu_{km} A_{km}$ and the corresponding matrix as $\bs{\Lambda} \in \mathbb{R}_{0+}^{K \times M}$. This bilinear product formulation for $\bs{\Lambda}$ is replaced by a linear reformulation using \eqref{eq:linset} by adding the constraint that $\left(\mu_{km}, \beta_1, A_{km},\Lambda_{km} \right) \in \mathcal{L}$. \\
In order to approximate the interference levels in the denominator of the SINR term Eq.~\eqref{eq:sinr}, we introduce the scalar interference levels $\Psi_{nkm}$ with interference scenario index $n=1,\ldots,N$, and the corresponding three-dimensional scalar tensor $\boldsymbol{\Psi} \in \mathbb{R}_{0+}^{N \times K \times M}$. We also introduce a binary interference scenario selection parameter $\phi_{nkm} $ and the corresponding three-dimensional binary tensor $\boldsymbol{\phi} \in \{0,1\}^{N \times K \times M}$. To ensure that the solution of the approximate problem is always feasible for the original, we add the constraint that the selected discrete interference level is always an over-approximation of the actual interference:
\begin{equation} \label{eq:powcs_intcons}
\sum_{n=1}^N \phi_{nkm} \Psi_{nkm} \geq \sum_{j=1,\ldots,K} (1-\Omega_{jm}) g_{jm} + \sigma^2 \; \forall k,m \\ 
\end{equation}
When implementing the selection parameter $\boldsymbol{\phi}$ in Eq.~\eqref{eq:orig2_loadc_ref2}, we replace the bilinear product $\tilde{p}_k g_{km} \phi_{nkm}$ with an auxiliary parameter, for which we introduce the lifting variable $\Phi_{nkm} = \tilde{p}_k g_{km} \phi_{nkm}$ with the corresponding tensor variable $\boldsymbol{\Phi} \in \mathbb{R}_{0+}^{N \times K \times M}$. Again, the product computation of $\boldsymbol{\Phi}$ will be replaced by an auxiliary parameter using \eqref{eq:linset} by adding the constraint $\left( \tilde{p}_k g_{km}, P_k^{\mathrm{MAX}} g_{km}, \phi_{nkm},\Phi_{nkm} \right) \in \mathcal{L} \; \forall n,k,m$. \\
The proposed linear inner approximation of \eqref{eq:orig2} is the following:
\begin{subequations} \label{eq:powcs}
	\begin{align}
	\underset{\boldsymbol{x},\boldsymbol{\tilde{p}},\boldsymbol{A},\bs{\tilde{\rho}},\boldsymbol{\Omega},\boldsymbol{\mu},\boldsymbol{\Lambda},\boldsymbol{\phi},\boldsymbol{\Phi}}{\mathrm{minimize}} \qquad & \sum_{k=1}^K \Gamma \left( x_k,\tilde{p}_k,\tilde{\rho}_k \right) & \; & \label{eq:powcs_obj}\\ 
	\mathrm{subject \; to} \qquad & \eqref{eq:orig1_alloc} - \eqref{eq:orig1_ax}, \eqref{eq:ptilde_pconst}, \eqref{eq:linset}, \eqref{eq:orig2_mpow}-\eqref{eq:orig2_sinrcon}, \eqref{eq:powcs_omega}, \eqref{eq:powcs_intcons} \nonumber \\
	& \tilde{\rho}_k = \sum_{m=1}^M \left( \frac{d_m}{W \eta^{\mathrm{BW}}} \Lambda_{km} \right) && \forall k \label{eq:powcs_loadc} \\
	& \tilde{\rho_k} \leq 1 && \forall k \label{eq:powcs_load} \\
	& \sum_{n=1}^N \phi_{nkm} = 1 && \forall k,m \label{eq:powcs_intsum} \\
	& \mu_{km} \geq \alpha_i \sum_{n=1}^N \frac{\Phi_{nkm}}{\Psi_{nkm}} + \beta_i && \forall i,k,m \label{eq:powcs_lineq} \\
	& \left(\mu_{km}, \beta_1, A_{km},\Lambda_{km} \right) \in \mathcal{L} && \forall k,m \label{eq:powcs_lambda} \\
	& \left( \tilde{p}_k g_{km}, P_k^{\mathrm{MAX}} g_{km}, \phi_{nkm},\Phi_{nkm} \right) \in \mathcal{L} && \forall n,k,m  \label{eq:powcs_phi} \\
	& x_k,A_{km},\phi_{nkm} \in \{0,1\} && \forall n,k,m \\
	& \tilde{p}_k, \Omega_{km},\mu_{km},\Lambda_{km},\Phi_{nkm} \in \mathbb{R}_{0+} && \forall n,k,m
	\end{align}
\end{subequations}
\begin{proposition} \label{th:consproof}
	Problem \eqref{eq:powcs} is an inner approximation of problem \eqref{eq:orig2}, i.e. for every point $\{\bs{x},\bs{p},\bs{A}\}$ solving \eqref{eq:powcs} a feasible point of \eqref{eq:orig2} can be constructed.
\end{proposition}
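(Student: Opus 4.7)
The plan is to take an arbitrary feasible tuple
$(\bs{x},\bs{\tilde{p}},\bs{A},\bs{\tilde{\rho}},\bs{\Omega},\bs{\mu},\bs{\Lambda},\bs{\phi},\bs{\Phi})$
of \eqref{eq:powcs} and exhibit a feasible point of \eqref{eq:orig2}. I would discard the auxiliary quantities $\bs{\mu},\bs{\Lambda},\bs{\phi},\bs{\Phi}$ that exist only in the inner approximation, keep $(\bs{x},\bs{\tilde{p}},\bs{A},\bs{\Omega})$ unchanged, and define $\rho_k$ directly via the nonlinear load expression \eqref{eq:orig2_loadc}. All constraints that appear in both problems --- namely \eqref{eq:orig1_alloc}, \eqref{eq:orig1_ax}, \eqref{eq:ptilde_pconst}, \eqref{eq:orig2_mpow}, \eqref{eq:orig2_sinrcon}, the $\mathcal{L}$-type constraint \eqref{eq:powcs_omega}, and the integrality and nonnegativity requirements --- are inherited for free.

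The only non-trivial condition to verify is $\rho_k \leq 1$, which I would establish as the chain $\rho_k \leq \tilde{\rho}_k \leq 1$; the right-hand inequality is \eqref{eq:powcs_load}. For the left-hand inequality I would argue term by term in the sums defining $\rho_k$ and $\tilde{\rho}_k$, showing that each summand $\frac{D_m}{W\eta_{km}^{\mathrm{BW}}}\Lambda_{km}$ dominates $A_{km}\frac{D_m}{W\eta_{km}^{\mathrm{BW}}}f_{\tau^\text{MIN}}^+(\gamma_{km})$, where $\gamma_{km}=\tilde{p}_k g_{km}/\bigl(\sum_j (1-\Omega_{jm})g_{jm}+\sigma^2\bigr)$. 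Because \eqref{eq:powcs_lambda} is an $\mathcal{L}$-constraint whose binary factor is $A_{km}$, it enforces $\Lambda_{km}=\mu_{km}A_{km}$ exactly; the case $A_{km}=0$ is trivial, so it reduces to showing $\mu_{km}\geq f_{\tau^\text{MIN}}^+(\gamma_{km})$ whenever $A_{km}=1$.

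To prove that bound I would chain three observations. First, \eqref{eq:powcs_intsum} together with the binary nature of $\bs{\phi}$ selects exactly one index $n^\star$ with $\phi_{n^\star km}=1$, and then \eqref{eq:powcs_phi} yields $\Phi_{n^\star km}=\tilde{p}_k g_{km}$ and $\Phi_{nkm}=0$ for $n\neq n^\star$. Second, the over-approximation constraint \eqref{eq:powcs_intcons} guarantees $\Psi_{n^\star km}\geq \sum_j (1-\Omega_{jm})g_{jm}+\sigma^2$, so $\sum_n \Phi_{nkm}/\Psi_{nkm}=\tilde{p}_k g_{km}/\Psi_{n^\star km}\leq \gamma_{km}$. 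Third, since the slopes satisfy $\alpha_i\leq 0$ by construction, multiplying reverses the inequality and gives $\alpha_i\sum_n \Phi_{nkm}/\Psi_{nkm}+\beta_i\geq u_i(\gamma_{km})$. Combining this with \eqref{eq:powcs_lineq} and the upper-bound property \eqref{eq:ufkt} delivers $\mu_{km}\geq \max_i u_i(\gamma_{km})\geq f_{\tau^\text{MIN}}^+(\gamma_{km})$, closing the argument.

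The main obstacle is the careful bookkeeping around the sign of $\alpha_i$ and the fact that the lifted equalities $\Lambda_{km}=\mu_{km}A_{km}$ and $\Phi_{nkm}=\tilde{p}_k g_{km}\phi_{nkm}$ are recovered exactly only because one factor in each lifted product is binary; once those subtleties are nailed down the rest is a chain of direct substitutions. As a by-product the same inequality $\tilde{\rho}_k\geq \rho_k$, together with the linear monotonicity of $\Gamma$ in its load argument, yields $\sum_k \Gamma(x_k,\tilde{p}_k,\rho_k)\leq \sum_k \Gamma(x_k,\tilde{p}_k,\tilde{\rho}_k)$, so the optimal value of \eqref{eq:orig2} is bounded above by that of \eqref{eq:powcs}, matching the \emph{inner approximation} terminology.
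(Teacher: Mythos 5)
Your proposal is correct and follows essentially the same route as the paper's proof: reduce feasibility to the load constraint, use the exactness of the $\mathcal{L}$-lifting ($\Lambda_{km}=\mu_{km}A_{km}$, $\Phi_{nkm}=\tilde{p}_k g_{km}\phi_{nkm}$), invoke the interference over-approximation \eqref{eq:powcs_intcons}, and exploit $\alpha_i\leq 0$ together with the upper-bound property \eqref{eq:ufkt}. Your forward-chained presentation (isolating the single active scenario $n^\star$ and handling the $A_{km}=0$ case explicitly) is if anything slightly tidier than the paper's backward chain of sufficient conditions, but the underlying argument is identical.
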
	
\begin{proof}
	The transmit power constraints \eqref{eq:ptilde_pconst}, the allocation constraints \eqref{eq:orig1_alloc}-\eqref{eq:orig1_ax} and the signal power constraints \eqref{eq:orig2_mpow}-\eqref{eq:orig2_sinrcon} are identical in problem \eqref{eq:orig2} and \eqref{eq:powcs}.\\
	The proposition therefore holds if the load in \eqref{eq:powcs_loadc} is an inner approximation of that in \eqref{eq:orig2_loadc}, specifically if
	\begin{equation} \label{eq:proof_1}
	\sum_{m=1}^M \left( \frac{D_m}{W \eta^{\mathrm{BW}}} \Lambda_{km} \right) \geq \sum_{m=1}^M A_{km} \frac{D_m}{W \eta_{km}^\mathrm{BW}} \frac{1}{\log_2 \left(1 + \frac{\tilde{p}_k g_{km}}{\sum_{j=1,\ldots,K} (1-\Omega_{jm}) g_{jm} + \sigma^2} \right)} \quad \forall \; k.
	\end{equation}
	Due to \eqref{eq:powcs_lambda}, we have $\Lambda_{km} = \mu_{km} A_{km}$, therefore \eqref{eq:proof_1} is satisfied if
	\begin{equation} \label{eq:proof2}
	\mu_{km} \geq \frac{1}{\log_2 \left(1 + \frac{\tilde{p}_k g_{km}}{\sum_{j=1,\ldots,K} (1-\Omega_{jm}) g_{jm} + \sigma^2} \right)} \quad \forall \; k,m,
	\end{equation}
	from which, with \eqref{eq:powcs_lineq} and \eqref{eq:ufkt} applied to the left- and right-hand side of Eq.~\eqref{eq:proof2}, respectively, we obtain
	\begin{equation} \label{eq:proof3}
	\alpha_i \sum_{n=1}^N \frac{\Phi_{nkm}}{\Psi_{nkm}} + \beta_i \geq \alpha_i \sum_{n=1}^N \frac{\tilde{p}_k g_{km}}{\sum_{j=1,\ldots,K} (1-\Omega_{jm}) g_{jm} + \sigma^2} + \beta_i \quad \forall \; i,k,m.
	\end{equation}
	Due to the constraints \eqref{eq:powcs_phi}, which implement the bilinear constraint $\Phi_{nkm} = \tilde{p}_k g_{km} \phi_{nkm}$, and due to $\phi_{nkm} \in \{0,1\} \ \forall \ n,k,m$, we have
	\begin{equation} \label{eq:proof_scenfracs}
	\sum_{n=1}^N \frac{\Phi_{nkm}}{\Psi_{nkm}} = \frac{\tilde{p}_k g_{km}}{\sum_{n=1}^N \phi_{nkm} \Psi_{nkm}} \quad \forall \; n,k,m.
	\end{equation}
	Substituting \eqref{eq:proof_scenfracs} in the left-hand side of \eqref{eq:proof3}, we obtain the inequality
	\begin{equation}
	\alpha_i \sum_{n=1}^N \frac{\tilde{p}_k g_{km}}{\sum_{n=1}^N \Psi_{nkm}} + \beta_i \geq \alpha_i \sum_{n=1}^N \frac{\tilde{p}_k g_{km}}{\sum_{j=1,\ldots,K} (1-\Omega_{jm}) g_{jm} + \sigma^2} + \beta_i \quad \forall \; i,k,m,
	\end{equation}
	which holds due to the constraint \eqref{eq:powcs_intcons} for $\alpha_i \leq 0 \ \forall \ i$, thus proving the proposition.
\end{proof}
The tightness of the approximating problem \eqref{eq:powcs} with regards to problem \eqref{eq:orig2} depends on two factors. The first factor is related to how closely the linear functions $u_i$ approximate the load function as in Eq.~\eqref{eq:ufkt}. The second factor is how closely the discrete interference levels $\Psi_{nkm}$ approximate the actual interference level $\sum_{j=1,\ldots,K} (1-\Omega_{jm}) g_{jm} + \sigma^2$. Proposition \ref{th:consproof} holds irrespectively the choice of the discrete interference levels $\Psi_{nkm}$. Certain changes in interference levels, specifically the removal of strongest interferers, cause large differences in the load caused by a DP. The levels $\Psi_{nkm}$ can be chosen in such a way that these changes can be reflected by the selection of a different interference scenario. The accuracy of the inner approximation can be improved by using a larger number of interference levels, at the cost of increased problem complexity. \\
We propose to construct, for each pair $(m,k)$ of DP $m$ allocated to cell $k$, interference levels $\Psi_{nkm}$ that mainly reflect transmit power changes of the first- and second-strongest interferers \cite{maattanen12,ramos17,gulati15}. With
\begin{equation}
v = \argmax_{j \setminus \{k\}} (p_{j}g_{jm})
\end{equation}
and
\begin{equation}
w = \argmax_{j \setminus \{k,v\} \}} (p_{j}g_{jm})
\end{equation}
we compute our interference levels as
\begin{equation} \label{eq:intlevels}
\Psi_{nkm} = l_{n}^{\mathrm{P}} p_{v}g_{vm} + l_{n}^{\mathrm{S}} p_{w}g_{wm} + l_{n}^{\mathrm{R}} \sum_{j \setminus \{k,v,w\}} p_{j}g_{jm} + \sigma^2,
\end{equation}
where the parameters $l_{n}^{\mathrm{P}}$,$l_{n}^{\mathrm{S}}$ and $l_{n}^{\mathrm{R}}$ denote the weighting factors for primary-, secondary- and remaining interferers, respectively. Keeping in mind that we focus on transmit power changes for the first- and second strongest interferers, a suitable set of weighting factors to compute the interference levels $\Psi_{nkm}$ is shown, for example, in Table \ref{tb:intweights}.
\begin{table}
	\setlength\extrarowheight{2pt}
	\begin{center}
	\caption{Weighting factors for computation of interference scenarios $\Psi_{nkm}$, used for an over-approximation of the actual interference level.}
	\label{tb:intweights}
		\begin{tabular}{| l || c | c | c | c | c | c | c | }
			\hline
			$\qquad n=$ & $1$ & $2$ & $3$ & $4$ & $5$ & $6$ & $7$  \\ \hline \hline
			$l_{n}^{\mathrm{P}}$ & $1$ &  $0.75$ & $0.5$ & $0.25$ & $0$ & $0$ & $0$  \\ \hline
			$l_{n}^{\mathrm{S}}$ & $1$ &  $1$ & $1$ & $1$ & $1$ & $0$ & $0$  \\ \hline
			$l_{n}^{\mathrm{R}}$ & $1$ &  $1$ & $1$ & $1$ & $1$ & $1$ & $0$  \\ \hline
		\end{tabular}
	\end{center}
\end{table}

\section{Simulation Results} \label{sec:simres}

\begin{figure}
	\centering	
%
%
\definecolor{mycolor1}{rgb}{0.07843,0.16863,0.54902}%
\definecolor{mycolor2}{rgb}{0.84706,0.16078,0.00000}%
\definecolor{mycolor3}{rgb}{0.94118,0.94118,0.94118}%
\begin{tikzpicture}

\begin{axis}[%
width=3in,
height=3in,
at={(0in,0in)},
scale only axis,
xmin=0,
xmax=1000,
xlabel={range (m)},
xmajorgrids,
ymin=0,
ymax=1000,
ylabel={crossrange (m)},
ymajorgrids,
axis background/.style={fill=mycolor3},
legend style={at={(0.325,0.83)},anchor=south west,legend cell align=left,align=left,draw=black}
]
\addplot[only marks,mark=diamond*,mark options={},mark size=8pt,draw=black,fill=mycolor1] plot table[row sep=crcr]{%
200	200\\
150	850\\
800	230\\
780	820\\
};
\addlegendentry{macro BS};

\addplot[only marks,mark=triangle*,mark options={},mark size=6pt,draw=black,fill=mycolor1] plot table[row sep=crcr]{%
500	700\\
520	310\\
320	500\\
690	490\\
};
\addlegendentry{pico BS};

\addplot[only marks,mark=*,mark options={},mark size=3pt,draw=black,fill=mycolor2] plot table[row sep=crcr]{%
838.255587537226	908.10241650695\\
584.71861926332	552.175026715835\\
948.108735396022	32.9398927498766\\
61.0289291925092	53.8629264355561\\
584.641303355111	805.063228558902\\
285.108085658642	451.374854703448\\
827.732173448263	382.646229559959\\
190.986440697398	789.643703689691\\
442.529962202884	364.286869499794\\
393.411506367576	532.34993499891\\
826.573979042765	711.656705981267\\
676.871093438419	871.476517995847\\
207.603034379981	328.689611672229\\
318.104726150263	650.118025397777\\
133.810985356126	974.836148002758\\
671.462889478031	75.9673612941356\\
570.991075462406	587.019167082772\\
169.767066026488	413.88649777336\\
147.655777151737	309.136426466267\\
476.079718267456	263.834041526795\\
};
\addlegendentry{demand point};

\end{axis}
\end{tikzpicture}%
	\caption{Illustration of the network scenario with 4 macro- and 4 small cells and an example distribution of 20 DPs. The network area is 1000m times 1000m and path loss between cells and DPs is modeled according to 3GPP TS 36.814 specification.}
	\label{fig:netset1}
\end{figure}
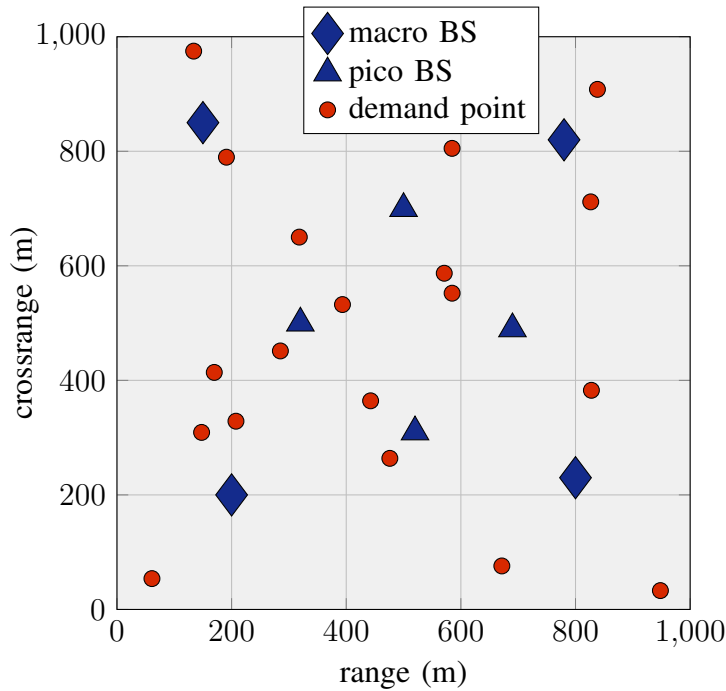

\begin{table} 
	\setlength\extrarowheight{2pt}
	\begin{center}
		\caption{Simulation parameters of a downlink LTE network. The transmit power of the cells is optimized inside a $10 \mathrm{dB}$ interval. Results are averaged over 5000 simulations with fixed base station positions and randomly distributed DPs.}
		\label{tb:LTE}
		\begin{tabular}{| l | r |}
			\hline
			Area size & $1000 \times 1000$ m \\ \hline
			Noise power & -145 dBm/Hz \\ \hline
			System bandwidth $W$  & $20$ MHz \\ \hline
			Position of macro BS & MBS1 at [200m, 200m] \\
			& MBS2 at [150m, 850m] \\
			& MBS3 at [800m, 230m] \\
			& MBS4 at [780m, 820m] \\ \hline
			MBS transmit power range $P^\text{MIN} \ldots P^\text{MAX}$ & $36 \mathrm{dBm} \ldots 46 \mathrm{dBm}$ \\ \hline
			MBS antenna gain $\tilde{g}^\text{BS}$ & $15 \mathrm{dB}$ \\ \hline
			MBS bias value $\theta_k$ & $0 \mathrm{dB}$ \\ \hline
			Position of pico BS & PBS1 at [500m, 700m] \\
			& PBS2 at [520m, 310m] \\
			& PBS3 at [320m, 500m] \\
			& PBS4 at [690m, 490m] \\ \hline
			PBS transmit power range $P^\text{MIN} \ldots P^\text{MAX}$ & $26 \mathrm{dBm} \ldots 36 \mathrm{dBm}$ \\ \hline
			PBS antenna gain $\tilde{g}^\text{BS}$ & $5 \mathrm{dB}$ \\ \hline
			PBS bias value $\theta_k$ & $3 \mathrm{dB}$ \\ \hline
			DP antenna gain $\tilde{g}^\text{DP}$ & $0 \mathrm{dB}$ \\ \hline
			Propagation loss $\tilde{g}^\mathrm{PATH}$ & 3GPP TS 36.814 \cite{3GPP} \\ \hline
			Bandwidth efficiency $\eta^{\mathrm{BW}}$ & 0.8 \\ \hline
			SINR requirement $\gamma^{\mathrm{MIN}}$ & $-10 \mathrm{dB}$ \\ \hline
			SINR threshold $\gamma^{\mathrm{MAX}}$ & $20 \mathrm{dB}$ \\
			\hline
		\end{tabular}
	\end{center}
\end{table}

To evaluate the performance of the proposed method, we simulate a heterogeneous wireless communication network containing 4 macro- and 4 pico cells as illustrated in Fig.~\ref{fig:netset1}. The selected system parameters are summarized in Table \ref{tb:LTE}. The selectable transmit power range and antenna gains are chosen as $36 \mathrm{dBm} - 46 \mathrm{dBm}$ with $15 \mathrm{dB}$ antenna gain for macro cells and $26 \mathrm{dBm} - 36 \mathrm{dBm}$ with $5 \mathrm{dB}$ antenna gain for small cells. A bias value of $\theta_k = 2 \mathrm{dB}$ is used for small cells to slightly increase their coverage area. The proposed method using Problem \eqref{eq:powcs} was solved using CVX for MATLAB \cite{cvx,gb08} and Gurobi as a MILP solver \cite{gurobi}. For the energy consumption modeling of cells we use Eq.~\eqref{eq:speceng} with  $\kappa_1 = 0.5$, $\kappa_2 = 0.5$ and $\kappa_3 = 0$. This implies that the power consumption of cell $k$ depends on its on-off status indicator $x_k$ and its transmit power $p_k$. The power consumption is modeled this way in order to allow comparability of the proposed MILP with an established heuristic method proposed in \cite{yuan15} that focuses on transmit power minimization. As a performance benchmark for our energy minimization algorithm we use the power scaling method introduced in \cite{yuan15}, which we extended in the following ways to make it applicable to our problem: power scaling is used for all possible configurations of all cells' on-off status $\bs{x}$. Resulting transmit powers obtained by the algorithm of \cite{yuan15} that lie below or above the bounds specified in Table \ref{tb:LTE} are projected to the lower- and upper bound respectively. Then, the best configuration that does not violate load constraints is selected as the solution. This algorithm therefore combines an exhaustive search over all configurations for $\bs{x}$ with power scaling being used in each configuration. It is in the following in all figures denoted as "power scaling + exh.~search". The second approach we use for comparison is an exhaustive search over all combinations of cells being switched on or off, with the transmit powers being fixed to $P^\mathrm{MAX}$, which we in the following indicate as "max power cell switching". The solution of the original MINLP in \eqref{eq:orig1} is unsuitable as a lower bound solution even for small problem sizes, because even for fixed binary optimization parameters the resulting continuous problem is still nonconvex. 

\begin{figure}
	\centering	
%
%
\definecolor{mycolor1}{rgb}{1.00000,0.84314,0.00000}%
\definecolor{mycolor2}{rgb}{0.94118,0.94118,0.94118}%
\begin{tikzpicture}

\begin{axis}[%
width=5in,
height=3in,
at={(0in,0in)},
scale only axis,
xmin=0,
xmax=7.5,
xlabel={user demand in Mbit/s},
xmajorgrids,
ymin=0,
ymax=1,
ylabel={probability of obtaining a feasible solution},
ymajorgrids,
axis background/.style={fill=mycolor2},
legend style={at={(0.176,0.26)},anchor=south west,legend cell align=left,align=left,draw=white!15!black}
]
\addplot [color=red,solid,line width=2.0pt,mark=o,mark options={solid}]
  table[row sep=crcr]{%
0.25	1\\
0.5	1\\
0.75	1\\
1	1\\
1.25	1\\
1.5	1\\
1.75	1\\
2	1\\
2.25	1\\
2.5	1\\
2.75	1\\
3	1\\
3.25	1\\
3.5	0.996875\\
3.75	0.989375\\
4	0.9590625\\
4.25	0.920625\\
4.5	0.8540625\\
4.75	0.7878125\\
5	0.7196875\\
5.25	0.6459375\\
5.5	0.563125\\
5.75	0.49625\\
6	0.43\\
6.25	0.3696875\\
6.5	0.3159375\\
6.75	0.2675\\
7	0.22328125\\
7.25	0.1915625\\
7.5	0.1615625\\
};
\addlegendentry{max. power cell switching};

\addplot [color=black,solid,line width=2.0pt,mark=diamond,mark options={solid}]
  table[row sep=crcr]{%
0.25	1\\
0.5	1\\
0.75	1\\
1	1\\
1.25	1\\
1.5	1\\
1.75	1\\
2	1\\
2.25	0.9996875\\
2.5	0.99796875\\
2.75	0.99140625\\
3	0.97953125\\
3.25	0.96140625\\
3.5	0.92234375\\
3.75	0.87453125\\
4	0.81484375\\
4.25	0.74359375\\
4.5	0.67359375\\
4.75	0.6021875\\
5	0.5346875\\
5.25	0.47296875\\
5.5	0.4096875\\
5.75	0.3478125\\
6	0.29953125\\
6.25	0.25609375\\
6.5	0.213125\\
6.75	0.1859375\\
7	0.16078125\\
7.25	0.1353125\\
7.5	0.1178125\\
};
\addlegendentry{pow. scaling + exh. search};

\addplot [color=mycolor1,solid,line width=2.0pt,mark=+,mark options={solid}]
  table[row sep=crcr]{%
0.25	1\\
0.5	1\\
0.75	1\\
1	1\\
1.25	1\\
1.5	1\\
1.75	1\\
2	1\\
2.25	0.9996875\\
2.5	1\\
2.75	1\\
3	0.99875\\
3.25	0.99734375\\
3.5	0.99390625\\
3.75	0.99015625\\
4	0.98609375\\
4.25	0.978125\\
4.5	0.96671875\\
4.75	0.9540625\\
5	0.9346875\\
5.25	0.90875\\
5.5	0.878125\\
5.75	0.8425\\
6	0.80375\\
6.25	0.75546875\\
6.5	0.705\\
6.75	0.65703125\\
7	0.59609375\\
7.25	0.54046875\\
7.5	0.48953125\\
};
\addlegendentry{proposed MILP};

\end{axis}
\end{tikzpicture}%
	\caption{Probability of obtaining a feasible solution over increasing user demand, evaluated over 5000 simulations of $M=20$ randomly distributed demand points. The proposed MILP-based scheme achieves the highest solution percentage.}
	\label{fig:solvp1}
\end{figure}
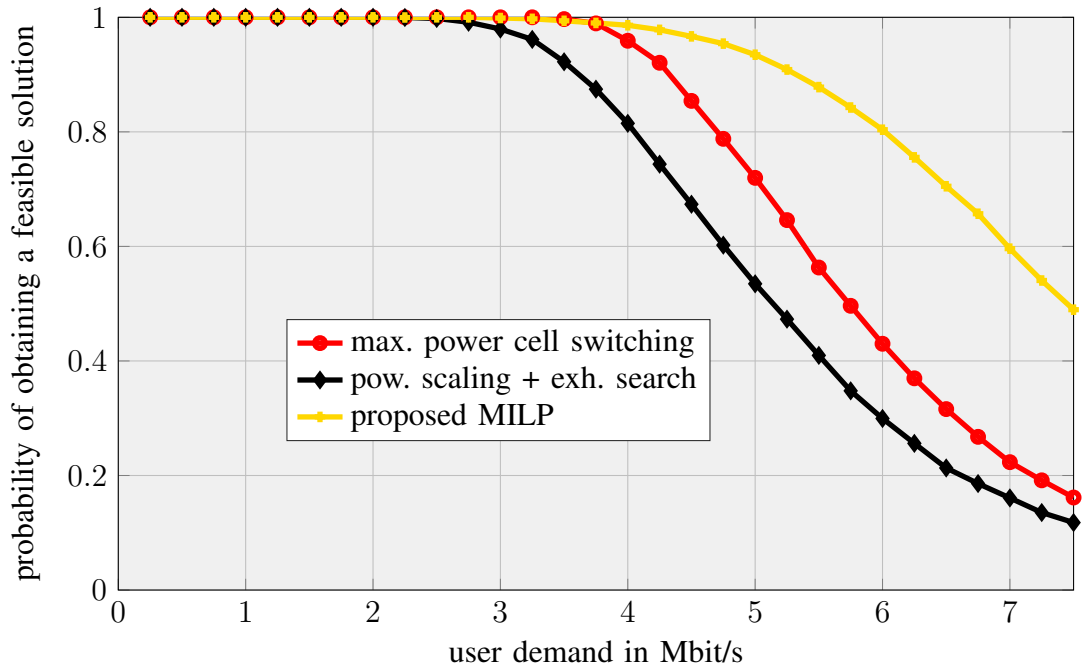
Deploying $M=20$ DPs randomly in the network area illustrated in Fig.~\ref{fig:netset1}, 5000 network scenarios are generated and each DPs data demand in each scenario is scaled between $d_m = 0.25$Mbit/s and $d_m = 7.5$Mbit/s. The proposed energy-minimized solution obtained from solving problem \eqref{eq:powcs} is compared to the solutions of the aforementioned max.~power cell switching and combined power scaling and exhaustive search methods \cite{yuan15}. The probability of obtaining a feasible solution with no overloaded cells is illustrated in Fig.\ref{fig:solvp1}. The proposed MILP based method is much more likely to find a feasible and power-minimized solution even in high demand scenarios. 
In the following we discuss the performance indicators: energy consumption, cell load, and number of active cells. To ensure a fair comparison, the respective averages were computed only from those scenarios that were solved by all methods. Fig.~\ref{fig:fullcomp1} shows the average power consumption achieved by each of the three considered energy minimization schemes. The proposed MILP-based approach achieves lower power consumption levels than both the cell switching and the heuristic approach. The cell switching method noticeably achieves good performance up until about $3$Mbit/s, with the performance significantly deteriorating for higher demands.

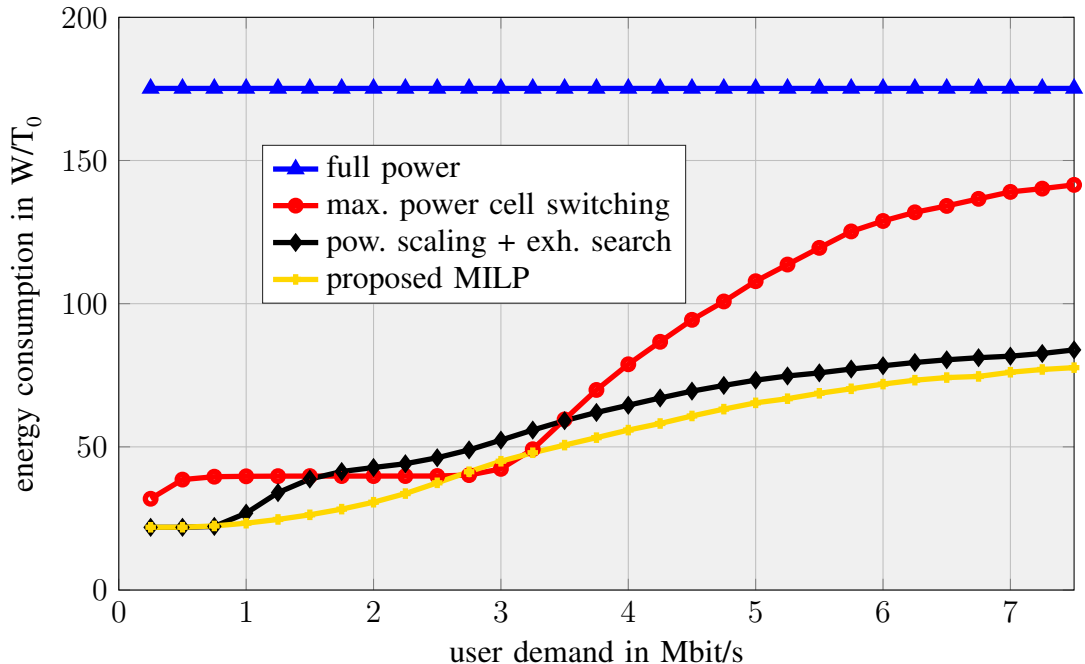
\begin{figure}
	\centering
%
%
\definecolor{mycolor1}{rgb}{1.00000,0.84314,0.00000}%
\definecolor{mycolor2}{rgb}{0.94118,0.94118,0.94118}%
\begin{tikzpicture}

\begin{axis}[%
width=5in,
height=3in,
at={(0in,0in)},
scale only axis,
xmin=0,
xmax=7.5,
xtick={0,1,2,3,4,5,6,7},
xlabel={user demand in Mbit/s},
xmajorgrids,
ymin=0,
ymax=200,
ytick={0,50,100,150,200},
ylabel={$\text{energy consumption in W/T}_\text{0}$},
ymajorgrids,
axis background/.style={fill=mycolor2},
legend style={at={(0.15,0.5)},anchor=south west,legend cell align=left,align=left,draw=white!15!black}
]
\addplot [color=blue,solid,line width=2.0pt,mark=triangle,mark options={solid}]
  table[row sep=crcr]{%
0.25	175.167155043546\\
0.5	175.167155043546\\
0.75	175.167155043546\\
1	175.167155043546\\
1.25	175.167155043546\\
1.5	175.167155043546\\
1.75	175.167155043546\\
2	175.167155043546\\
2.25	175.167155043546\\
2.5	175.167155043546\\
2.75	175.167155043546\\
3	175.167155043546\\
3.25	175.167155043546\\
3.5	175.167155043546\\
3.75	175.167155043546\\
4	175.167155043546\\
4.25	175.167155043546\\
4.5	175.167155043546\\
4.75	175.167155043546\\
5	175.167155043546\\
5.25	175.167155043546\\
5.5	175.167155043546\\
5.75	175.167155043546\\
6	175.167155043546\\
6.25	175.167155043546\\
6.5	175.167155043546\\
6.75	175.167155043546\\
7	175.167155043546\\
7.25	175.167155043546\\
7.5	175.167155043546\\
};
\addlegendentry{full power};

\addplot [color=red,solid,line width=2.0pt,mark=o,mark options={solid}]
  table[row sep=crcr]{%
0.25	31.8744549239117\\
0.5	38.5685531915361\\
0.75	39.5819419904286\\
1	39.715685021086\\
1.25	39.7747589883294\\
1.5	39.7940168513549\\
1.75	39.8029955650503\\
2	39.8029629796344\\
2.25	39.8107170553472\\
2.5	39.8241213371842\\
2.75	40.1370800390735\\
3	42.3152341789622\\
3.25	49.1749745012463\\
3.5	59.5787394955811\\
3.75	69.8761023315258\\
4	78.8346086082732\\
4.25	86.6894207508603\\
4.5	94.3603969301122\\
4.75	100.772084944923\\
5	107.843272274546\\
5.25	113.663075004895\\
5.5	119.487144949369\\
5.75	125.231737107446\\
6	128.877902793358\\
6.25	131.901690871954\\
6.5	134.138528506326\\
6.75	136.610273343923\\
7	139.031273408682\\
7.25	140.176075861485\\
7.5	141.486674367663\\
};
\addlegendentry{max. power cell switching};

\addplot [color=black,solid,line width=2.0pt,mark=diamond,mark options={solid}]
  table[row sep=crcr]{%
0.25	21.8915144715827\\
0.5	21.8958943804437\\
0.75	22.2171755425387\\
1	26.8478981062892\\
1.25	34.0138592816698\\
1.5	38.7636463808143\\
1.75	41.3820725036307\\
2	42.7920622409683\\
2.25	44.1124625472712\\
2.5	46.1973882007304\\
2.75	48.9364245284415\\
3	52.3572665478821\\
3.25	55.8710594305084\\
3.5	59.1442466360053\\
3.75	62.0120112420905\\
4	64.5308659580881\\
4.25	67.0667125529561\\
4.5	69.461551344073\\
4.75	71.4481259779569\\
5	73.2872398514663\\
5.25	74.7498379892237\\
5.5	75.8347534740292\\
5.75	77.1762696989666\\
6	78.3159264707606\\
6.25	79.4607666044923\\
6.5	80.400559208089\\
6.75	81.1323923850067\\
7	81.6693892669127\\
7.25	82.6034262254524\\
7.5	83.8629641064815\\
};
\addlegendentry{pow. scaling + exh. search};

\addplot [color=mycolor1,solid,line width=2.0pt,mark=+,mark options={solid}]
  table[row sep=crcr]{%
0.25	21.9240969278826\\
0.5	21.9378528911892\\
0.75	22.385238178632\\
1	23.3445471158171\\
1.25	24.6535168306159\\
1.5	26.2704266727102\\
1.75	28.2568610766129\\
2	30.6659818822323\\
2.25	33.6870784894008\\
2.5	37.4429071597945\\
2.75	41.3155135188071\\
3	44.9819281307925\\
3.25	47.9456189271238\\
3.5	50.6128966000133\\
3.75	53.2166048991997\\
4	55.8568935464397\\
4.25	58.1549973547781\\
4.5	60.7887890674152\\
4.75	63.174620508616\\
5	65.3703910202834\\
5.25	66.8226685557283\\
5.5	68.7022406139624\\
5.75	70.2830170300323\\
6	71.9164223437277\\
6.25	73.2817743232338\\
6.5	74.1972579660225\\
6.75	74.593563663145\\
7	76.0531492671726\\
7.25	77.0215098586517\\
7.5	77.6680552553618\\
};
\addlegendentry{proposed MILP};

\end{axis}
\end{tikzpicture}%
	\caption{Energy consumption for energy minimization schemes over increasing user demand, averaged over 5000 simulations of $M=20$ randomly distributed demand points. The proposed scheme achieves the lowest average energy consumption levels of the evaluated schemes.}
	\label{fig:fullcomp1}
\end{figure}

\begin{figure}
	\centering	
%
%
\definecolor{mycolor1}{rgb}{1.00000,0.84314,0.00000}%
\definecolor{mycolor2}{rgb}{0.94118,0.94118,0.94118}%
\begin{tikzpicture}

\begin{axis}[%
width=5in,
height=3in,
at={(0in,0in)},
scale only axis,
xmin=0,
xmax=7.5,
xlabel={user demand in Mbit/s},
xmajorgrids,
ymin=1,
ymax=4,
ytick={1, 2, 3, 4},
ylabel={average number of active cells},
ymajorgrids,
axis background/.style={fill=mycolor2},
legend style={at={(0.144,0.754)},anchor=south west,legend cell align=left,align=left,draw=white!15!black}
]
\addplot [color=red,solid,line width=2.0pt,mark=o,mark options={solid}]
  table[row sep=crcr]{%
0.25	1.61203419906437\\
0.5	1.11010068203962\\
0.75	1.02220484753714\\
1	1.010625\\
1.25	1.00390625\\
1.5	1.00156298843389\\
1.75	1.00093940817285\\
2	1.00094324791699\\
2.25	1\\
2.5	1.00032573289902\\
2.75	1.01080898496876\\
3	1.09470702779253\\
3.25	1.34103286384977\\
3.5	1.64957610012111\\
3.75	1.90973451327434\\
4	2.13508900268227\\
4.25	2.34856361149111\\
4.5	2.51706117430065\\
4.75	2.66342480027787\\
5	2.83627608346709\\
5.25	2.98928238583411\\
5.5	3.14048257372654\\
5.75	3.28408370323399\\
6	3.38098693759071\\
6.25	3.45688225538972\\
6.5	3.51509250243427\\
6.75	3.57367829021372\\
7	3.62694300518135\\
7.25	3.66268656716418\\
7.5	3.73344651952462\\
};
\addlegendentry{max. power cell switching};

\addplot [color=black,solid,line width=2.0pt,mark=diamond,mark options={solid}]
  table[row sep=crcr]{%
0.25	1.00096789804807\\
0.5	1\\
0.75	1.01907740422205\\
1	1.2984375\\
1.25	1.66359375\\
1.5	1.86714598311973\\
1.75	1.9539689995303\\
2	1.9940260965257\\
2.25	2.05177632627051\\
2.5	2.16107491856678\\
2.75	2.30163823678433\\
3	2.47070277925297\\
3.25	2.63981220657277\\
3.5	2.7908760597497\\
3.75	2.92411504424779\\
4	3.04998780785174\\
4.25	3.16990424076607\\
4.5	3.28097141100523\\
4.75	3.36818339701285\\
5	3.45184590690209\\
5.25	3.5232991612302\\
5.5	3.58391420911528\\
5.75	3.65250475586557\\
6	3.6988388969521\\
6.25	3.74295190713101\\
6.5	3.77994157740993\\
6.75	3.81664791901012\\
7	3.83678756476684\\
7.25	3.88208955223881\\
7.5	3.93378607809847\\
};
\addlegendentry{pow. scaling + exh. search};

\addplot [color=mycolor1,solid,line width=2.0pt,mark=+,mark options={solid}]
  table[row sep=crcr]{%
0.25	1.30666236489756\\
0.5	1.57794738551478\\
0.75	1.00906958561376\\
1	1.0034375\\
1.25	1.0084375\\
1.5	1.02250703344795\\
1.75	1.05714733051511\\
2	1.18220405596604\\
2.25	1.39397801497531\\
2.5	1.66449511400651\\
2.75	1.9270393514609\\
3	2.1671092228713\\
3.25	2.3275117370892\\
3.5	2.45841744045216\\
3.75	2.55\\
4	2.63691782492075\\
4.25	2.72804377564979\\
4.5	2.83092529972333\\
4.75	2.92810003473428\\
5	3.02728731942215\\
5.25	3.08946877912395\\
5.5	3.13994638069705\\
5.75	3.19657577679138\\
6	3.24455732946299\\
6.25	3.29436152570481\\
6.5	3.34858812074002\\
6.75	3.34983127109111\\
7	3.41321243523316\\
7.25	3.47313432835821\\
7.5	3.55008488964346\\
};
\addlegendentry{proposed MILP};

\end{axis}
\end{tikzpicture}%
	\caption{Number of active cells for energy minimization schemes over increasing user demand, averaged over 5000 simulations of $M=20$ randomly distributed demand points. For high demand, the proposed scheme on average utilizes the lowest number of cells.}
	\label{fig:ncells1}
\end{figure}
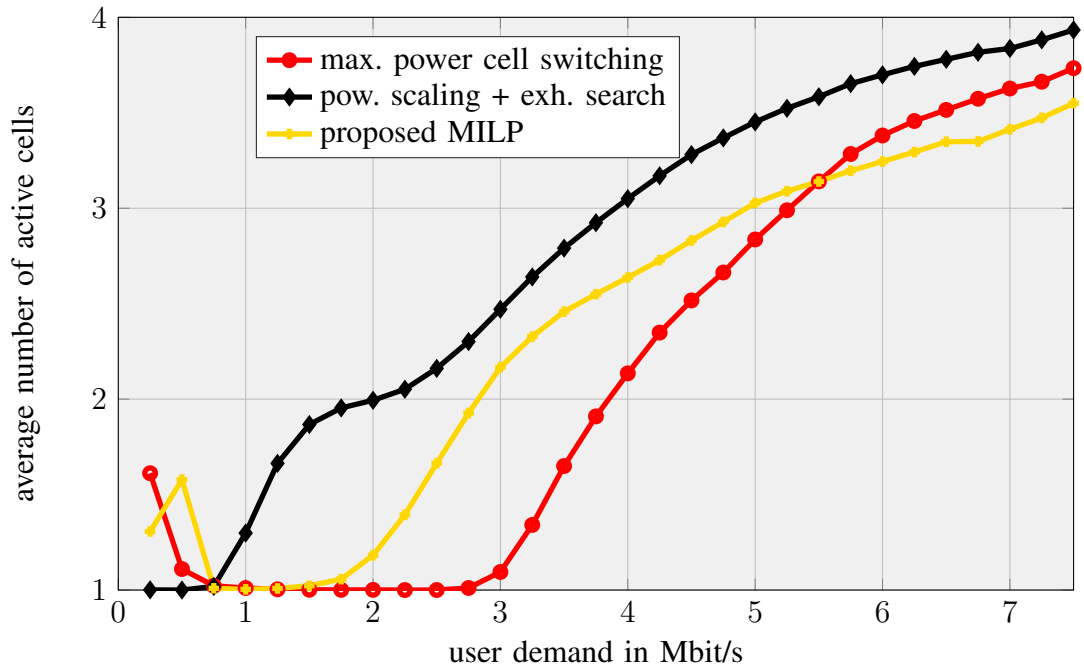

In Fig.~\ref{fig:ncells1}, the average number of active cells is shown. For very low demands, it can be observed that the number of cells is not increasing continuously with the demand, as the proposed algorithm for some scenarios serves all users exclusively with pico cells, instead of using a single macro cell. In practice this does not pose a problem since for these low load levels offloading is not required. On average however less than 4 cells are being used, showing that small cells are only used sporadically or for low demand levels. For very high demand levels, the proposed method utilizes the lowest number of cells.

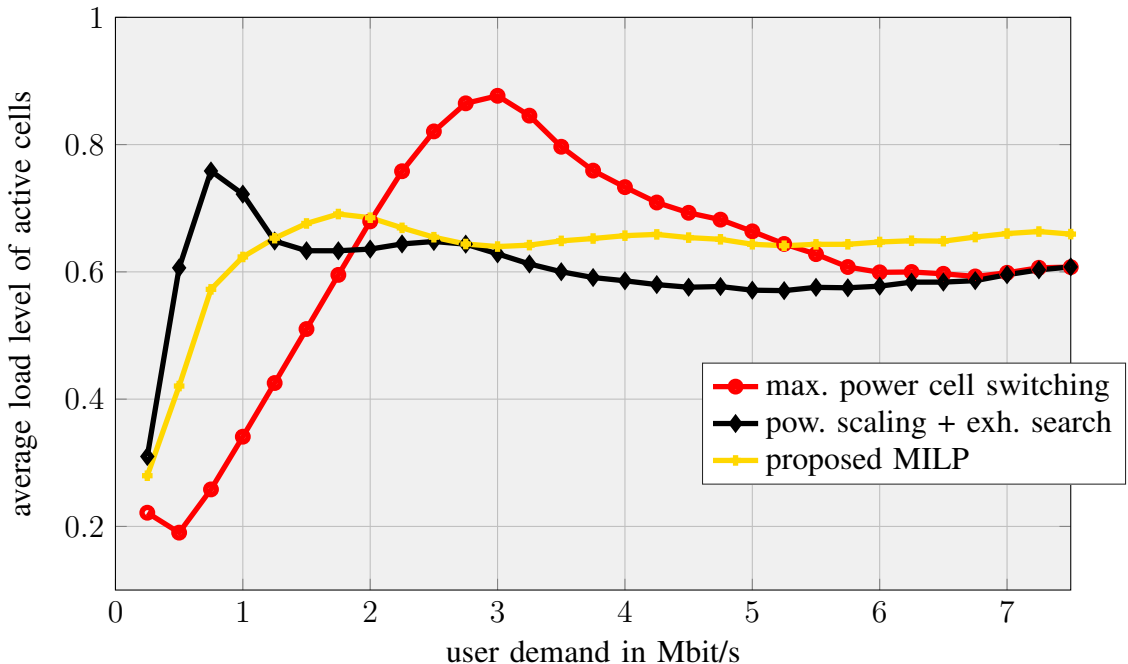
\begin{figure}
	\centering	
%
%
\definecolor{mycolor1}{rgb}{1.00000,0.84314,0.00000}%
\definecolor{mycolor2}{rgb}{0.94118,0.94118,0.94118}%
\begin{tikzpicture}

\begin{axis}[%
width=5in,
height=3in,
at={(0in,0in)},
scale only axis,
xmin=0,
xmax=7.5,
xlabel={user demand in Mbit/s},
xmajorgrids,
ymin=0.1,
ymax=1,
ylabel={average load level of active cells},
ymajorgrids,
axis background/.style={fill=mycolor2},
legend style={at={(0.614,0.184)},anchor=south west,legend cell align=left,align=left,draw=white!15!black}
]
\addplot [color=red,solid,line width=2.0pt,mark=o,mark options={solid}]
  table[row sep=crcr]{%
0.25	0.221501722132348\\
0.5	0.190207437977149\\
0.75	0.258132865345825\\
1	0.340977966827405\\
1.25	0.425345699149585\\
1.5	0.510178425624253\\
1.75	0.595254181002348\\
2	0.679341889152865\\
2.25	0.758163359605824\\
2.5	0.820746387918203\\
2.75	0.864848293515504\\
3	0.876688765947421\\
3.25	0.845497198577415\\
3.5	0.796611685174892\\
3.75	0.759122334542958\\
4	0.733157382734085\\
4.25	0.70899004704767\\
4.5	0.692838667053076\\
4.75	0.682091162229611\\
5	0.663666345629557\\
5.25	0.643863669310591\\
5.5	0.627868420555801\\
5.75	0.607651544622376\\
6	0.599264062334114\\
6.25	0.599843188460776\\
6.5	0.596968287932952\\
6.75	0.592819115204876\\
7	0.598549141228215\\
7.25	0.606433651572519\\
7.5	0.607466357587047\\
};
\addlegendentry{max. power cell switching};

\addplot [color=black,solid,line width=2.0pt,mark=diamond,mark options={solid}]
  table[row sep=crcr]{%
0.25	0.309890881379022\\
0.5	0.606313066204018\\
0.75	0.758551925147836\\
1	0.722149352529063\\
1.25	0.648450268069789\\
1.5	0.633219953542439\\
1.75	0.633157445245902\\
2	0.635799172206514\\
2.25	0.6438774478678\\
2.5	0.647736090777087\\
2.75	0.643536423023458\\
3	0.628292788496193\\
3.25	0.61245488448152\\
3.5	0.600202429659136\\
3.75	0.591087239762562\\
4	0.585977507583565\\
4.25	0.579998158033531\\
4.5	0.575868742202923\\
4.75	0.57695825732138\\
5	0.571213176857581\\
5.25	0.5705286938876\\
5.5	0.575673610361358\\
5.75	0.574975297791969\\
6	0.577554228295\\
6.25	0.583834204059851\\
6.5	0.583917216354292\\
6.75	0.586182811560145\\
7	0.595505001456623\\
7.25	0.603147504955595\\
7.5	0.607784179911681\\
};
\addlegendentry{pow. scaling + exh. search};

\addplot [color=mycolor1,solid,line width=2.0pt,mark=+,mark options={solid}]
  table[row sep=crcr]{%
0.25	0.279667938667945\\
0.5	0.420569060088205\\
0.75	0.572502371194816\\
1	0.623447916397452\\
1.25	0.652979982677574\\
1.5	0.676064564068241\\
1.75	0.69088684978164\\
2	0.685097303304385\\
2.25	0.669164580722449\\
2.5	0.654442237457837\\
2.75	0.643780036573251\\
3	0.639752114021086\\
3.25	0.642048465309026\\
3.5	0.64870617473038\\
3.75	0.652411065157104\\
4	0.656928326707093\\
4.25	0.658864955006614\\
4.5	0.653849767608454\\
4.75	0.65118021760131\\
5	0.643380113354578\\
5.25	0.641101865030298\\
5.5	0.64327664828213\\
5.75	0.64331803454868\\
6	0.646984359086644\\
6.25	0.648897431498176\\
6.5	0.648240228142949\\
6.75	0.654737841400878\\
7	0.660192798248105\\
7.25	0.66339815490564\\
7.5	0.659356457172308\\
};
\addlegendentry{proposed MILP};

\end{axis}
\end{tikzpicture}%
	\caption{Load of active cells for energy minimization schemes over increasing user demand, averaged over 5000 simulations of $M=20$ randomly distributed demand points.}
	\label{fig:avgrho1}
\end{figure}
The average load factor of active cells is shown in Fig.~\ref{fig:avgrho1}. It is observable that the cell load does not converge to 1 even for high loads. It was shown in \cite{yuan15} that for minimum energy consumption, the load would be equal to 1. This however only holds if the transmit power can be increased or decreased without bounds (i.e.~for $P^\text{MIN}=0$ and $P^\text{MIN}=\infty$), and if the cell load is a strictly decreasing function of the transmit power. With the upper- and lower bounds on the transmit power, the discontinuities we introduced in the load computation, and the user allocation changing dynamically with the transmit powers, we observe from Fig.~\ref{fig:avgrho1} that this property no longer holds.

\section{Conclusion} \label{sec:conc}
In this paper we proposed a novel method for minimizing the energy consumption of a wireless communication network, subject to cell load constraints. The transmit power and the cell activity are jointly optimized in a mixed integer linear problem. Multiple simplifications used in other state of the art methods to allow the application of heuristic schemes are not required in the proposed method. \\
The simulation results show that the proposed approach achieved a further decrease in energy consumption relative to both an optimization of the cell activity and a comparable heuristic method. Additionally, it achieves a higher success rate in finding an operable solution for high-demand network scenarios.\\
Even though the proposed method the proposed method consists in linear approximations of the originally mixed integer nonlinear program with bilinear and nonconvex constraints, it still yields very high complexity, making it impractical for the optimization of large networks. Further work could be dedicated to combining existing heuristic methods with an utilization of the proposed approach to optimize smaller clusters of the network, to allow for better scalability.
\appendix
\section{Error constrained load approximation} \label{sec:applin}
As illustrated in Fig.~\ref{fig:linapprox}, we aim to find linear functions 
\begin{equation} \label{eq:urep}
u_i(\gamma) = \alpha_i \gamma + \beta_i,
\end{equation}
indicated with $i=1,\ldots,I$, which satisfy the condition
\begin{equation} \label{eq:ufktrep}
\underset{i}{\max} \; u_i(\gamma) \geq \frac{1}{\log_2(1+\gamma)} \quad \forall \; \gamma \geq \gamma^\mathrm{MIN}.
\end{equation}
with $\alpha_i \leq 0 \ \forall \ i$. The problem of finding suitable linear functions $u_i(\gamma)$ is equivalent to finding a set of breakpoints on $f(\gamma)$ where the linear functions $u_i$ are the lines connecting each two respectively neighboring break points. As discussed in \cite{lin13,lin15}, a good breakpoint selection strategy is to start with the function values of the interval endpoints as the first two breakpoints. Assuming the line connecting these two points to be the linearization solution, we compute the position $\gamma$ where the maximum approximation error occurs. If that error is larger than a predefined threshold $\epsilon$, we add a breakpoint at that position, and we again determine the linear functions between neighboring breakpoints. The procedure is then continued until in each interval between two breakpoints the maximum approximation error is lower than $\epsilon$. \\
Assuming $u(\gamma) \geq f(\gamma)$, we define the approximation error function
\begin{align}
\xi(\gamma) & = u(\gamma) - f(\gamma) \\
& = \alpha \gamma + \beta - \frac{1}{\log_2(\gamma+1)}
\end{align}
and the derivative
\begin{equation}
\frac{\mathrm{d}\xi(\gamma)}{\mathrm{d} \gamma} = \alpha + \frac{\log(2)}{(\gamma + 1) \log^2(\gamma+1)}.
\end{equation}
There is $\frac{\mathrm{d}\xi(\gamma)}{\mathrm{d} \gamma} = 0$ for
\begin{equation}
\gamma = \delta(\alpha_i) = e^{2\mathcal{W} \left( \frac{1}{2} \sqrt{-\frac{\log(2)}{\alpha }}\right)} \quad \forall \; \alpha_i < 0
\end{equation}
where $\mathcal{W}$ is the Lambert W-Function defined as
\begin{equation}
y = f^{-1} \left(y e^y \right) = \mathcal{W} \left(y e^y \right).
\end{equation}
\begin{algorithm} [H]
	\caption{Breakpoint selection algorithm}\label{alg:bps}
	\begin{algorithmic}[1]
		\Procedure{BPS}{$\gamma_1,\gamma_2$}
		\State $\alpha \gets \frac{f(\gamma_2)-f(\gamma_1)}{\gamma_2-\gamma_1}$
		\If{$|\xi(\delta(\alpha))|\leq\epsilon$} 
		\State \textbf{return} $\{\}$ 
		\Else 
		\State \textbf{return} $\{\mathrm{BPS}(\gamma_1,\delta(\alpha)),f(\delta(\alpha)),\mathrm{BPS}(\delta(\alpha),\gamma_2)\}$
		\EndIf
		\EndProcedure
	\end{algorithmic}
\end{algorithm}
To determine the set of breakpoints we define the procedure BPS (Algorithm \ref{alg:bps}) which returns a set of breakpoints necessary between given interval endpoints $(\gamma_1,\gamma_2)$. This way a set of $\gamma$-positions of the breakpoints on $f(\gamma)$ can be obtained, and by connecting each respectively neighboring pair of points in this set, the linear functions $u_i(\gamma)$ can be determined and used in Problem \eqref{eq:powcs}.

\newpage
\bibliographystyle{IEEEtran}
\bibliography{P1B_journal_refs}
\end{document}